\newtheorem{proposition}{Proposition}[section]
\title{Modeling Multiple Irregularly Spaced Financial Time Series}
\author[a]{Chiranjit Dutta \thanks{Corresponding Author: chiranjit.dutta@uconn.edu}}
\author[b]{Nalini Ravishanker}
\author[c]{Sumanta Basu}
\affil[a,b]{Department of Statistics, University of Connecticut, Storrs, CT, USA}
\affil[c]{Department of Statistics and Data Science, Cornell University, Ithaca, NY, USA}
\date{\today}
\begin{document}
\maketitle

\begin{abstract}
In this paper we propose univariate volatility models for irregularly spaced financial time series by modifying the regularly spaced stochastic volatility models. We also extend this approach to propose multivariate stochastic volatility (MSV) models for multiple irregularly spaced time series by modifying the MSV model that was used with daily data. We use these proposed models for modeling intraday logarithmic returns from health sector stocks data obtained from Trade and Quotes (TAQ) database at Wharton Research Data Services (WRDS).
\end{abstract}



\section{Introduction}
\label{1}

In finance, we often encounter time-series data that exhibits irregular spacing in time meaning that the time intervals between successive data points are not the same. This type of data is known as irregularly spaced time series data. We propose univariate and multivariate stochastic volatility models for irregularly spaced time series.  Specifically, we build on the gap time modeling idea of \cite{nagaraja2011autoregressive} to construct useful time series models that can help better understand volatility patterns in irregularly spaced financial time series. To achieve this, we modify existing stochastic volatility models that were originally designed for regularly spaced data. Additionally, we extend this approach to model multiple irregularly spaced time series by modifying multivariate stochastic volatility (MSV) model of \citep{chib2009multivariate} that was used with daily data.
High-frequency financial data are inherently irregularly spaced since trades can occur at any point in time. Moreover, the microstructure of financial markets, such as the methods of placing and executing orders, can also cause unevenly spaced data. For instance, in electronic markets, trades can be executed rapidly, and traders may use different strategies and algorithms that result in varying transaction frequencies.
The literature on dynamic statistical modeling is sparse. Practitioners interested in understanding dynamic evolution of stock properties require specialized models since the ones designed for regularly spaced data are inappropriate.
One of the early approaches to model the return series sampled at irregularly spaced time intervals set by the trade arrivals is ACD-GARCH model by \cite{ghysels1998garch}. This is a random coefficient GARCH, or doubly stochastic GARCH, where the stochastic durations between transactions determine the parameter dynamics for the GARCH equation. They proposed a Generalized Method of Moments (GMM) based two-step procedure for estimating the parameters. ACD-GARCH is quite cumbersome to estimate and also difficult to generalize to multiple time series. \cite{meddahi2006garch} proposed a GARCH type model for irregularly spaced data which is an exact discretization of continuous time stochastic volatility processes observed at irregularly spaced times. Their model combines the advantages of ACD-GARCH \cite{ghysels1998garch} and ACD \citep{Engle1998}. A continuous version of GARCH (COGARCH) is another way to model irregularly spaced time series data \citep{maller2008garch}. Recently, \cite{buccheri2021score} propose to model intraday log-prices through a multivariate local-level model with score-driven covariance matrices and to treat asynchronicity as a missing value problem.

\textcolor{red}{More details are needed}

\textcolor{red}{Write a short description about each section?}


\section{Volatility Models for Univariate Irregularly Spaced Time Series}


Let $\{r_{t_j}\}$ be a sequence of log-returns of a financial asset, where 
$t_{j}$ denotes the  time  of the $j^{th}$transaction. Let   
and $g_{j} = t_{j} - t_{j-1}, j>1$ be the known gap times between consecutive returns. Unlike regularly spaced time series (such as daily returns), for irregularly spaced time series (such as transaction level intra-day returns), the gap times $g_{j},~j>1$ are not the same.

\subsection{Irregularly spaced stochastic volatility (IR-SV) Model} \label{sec:IR-SV_properties}

We define an irregular stochastic volatility (IR-SV) model for $r_{t_j}$ as 
\begin{align}
        r_{t_{j}} & = \exp\Bigg(\frac{h_{t_{j}}}{2}\Bigg) \epsilon_{t_{j}}, \quad \epsilon_{t_{j}} \sim N(0,1), \label{Gap_SV1}\\
        h_{t_{1}} & = \mu + \eta_{t_{1}} , \quad \eta_{t_{1}} \sim N\Big(0,\dfrac{\sigma_{\eta}^2}{1-\phi^2}\Big), \label{Gap_SV2} \\
        h_{t_{j}} & = \mu + \phi^{g_j} (h_{t_{j-1}} - \mu) + \eta_{t_{j}} , \quad \eta_{t_{j}} \sim N\Bigg(0,\dfrac{\sigma_{\eta}^2 (1-\phi^{2g_j})}{1-\phi^2}\Bigg), \text{ for $j>1$}, \label{Gap_SV3}
\end{align}
where $\phi$ is the persistence parameter and $\mu$ is the location parameter.
In \eqref{Gap_SV2}-\eqref{Gap_SV3}, we assume that the log-volatility process $\{h_{t_j}\}$ is 
a stationary Gaussian autoregressive (AR) process with $|\phi| < 1$, and $\phi \neq 0$. We also assume that the gap times $g_j,~j>1$ are  bounded away from $0$. 

The formulation of the IR-SV model in \eqref{Gap_SV1}-\eqref{Gap_SV3} can be viewed as an extension to volatility modeling of the stationary gap time autoregressive (gap-AR) model 
for an irregularly spaced time series 
described in \cite{nagaraja2011autoregressive}. 



\begin{proposition} $h_{t_{j}}$ is a weakly stationary process and the unconditional distribution of $h_{t_{j}}$ is
\begin{equation*}
    h_{t_j} \sim \text{N}\Big(\mu,\frac{\sigma_{\eta}^2}{1-\phi^{2}}\Big),
\end{equation*}
\end{proposition}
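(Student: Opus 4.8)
The plan is to prove the marginal law by induction on $j$ and then upgrade to weak stationarity by a direct computation of the autocovariance. Throughout I will abbreviate $\gamma^2 = \sigma_\eta^2/(1-\phi^2)$ for the claimed stationary variance, and I will rely on the standing assumptions that each innovation $\eta_{t_j}$ is independent of the past log-volatilities $h_{t_{j-1}}, h_{t_{j-2}}, \dots$, and that $|\phi| < 1$ so that the powers $\phi^{g_j}$ and $\phi^{2g_j} = (\phi^2)^{g_j}$ are well defined and bounded.

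For the base case, equation \eqref{Gap_SV2} gives $h_{t_1} = \mu + \eta_{t_1}$ with $\eta_{t_1} \sim N(0, \gamma^2)$, so $h_{t_1} \sim N(\mu, \gamma^2)$ immediately. For the inductive step I would assume $h_{t_{j-1}} \sim N(\mu, \gamma^2)$ and apply \eqref{Gap_SV3}. Taking expectations gives $E[h_{t_j}] = \mu$, since the term $\phi^{g_j}(h_{t_{j-1}}-\mu)$ is centered. Taking variances and using independence of $\eta_{t_j}$ and $h_{t_{j-1}}$ yields $\operatorname{Var}(h_{t_j}) = \phi^{2g_j}\gamma^2 + \sigma_\eta^2(1-\phi^{2g_j})/(1-\phi^2)$; the key cancellation is that $\phi^{2g_j}\gamma^2 = \phi^{2g_j}\sigma_\eta^2/(1-\phi^2)$, so the two terms combine to exactly $\gamma^2$, independent of the gap $g_j$. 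Because $h_{t_j}$ is an affine function of the jointly Gaussian pair $(h_{t_{j-1}}, \eta_{t_j})$, it is itself Gaussian, which completes the induction and establishes the unconditional law $N(\mu, \gamma^2)$ for every $j$.

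To obtain genuine second-order stationarity I would then verify that the autocovariance depends on the sampling times only through their separation. Writing $\tilde h_{t_j} = h_{t_j} - \mu$, the recursion reads $\tilde h_{t_j} = \phi^{g_j}\tilde h_{t_{j-1}} + \eta_{t_j}$; multiplying by $\tilde h_{t_{j-1}}$ and taking expectations (again using independence of $\eta_{t_j}$ from the past) gives $\operatorname{Cov}(h_{t_j}, h_{t_{j-1}}) = \phi^{g_j}\gamma^2 = \phi^{\,t_j - t_{j-1}}\gamma^2$. Iterating this one-step relation across consecutive gaps and telescoping the exponents $g_j + g_{j-1} + \cdots$ into the cumulative time difference yields $\operatorname{Cov}(h_{t_j}, h_{t_k}) = \phi^{\,t_j - t_k}\gamma^2$ for any $j > k$, which depends only on $t_j - t_k$. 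Together with the constant mean $\mu$ and constant variance $\gamma^2$, this is exactly weak stationarity.

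The computations are elementary, so there is no deep obstacle; the two points requiring care are (i) making the independence of $\eta_{t_j}$ from the past explicit, since both the variance cancellation and the covariance recursion depend on it, and (ii) organizing the telescoping step so that the gap-specific exponents $2g_j$ in the innovation variances collapse into a gap-free marginal variance and a purely separation-dependent autocovariance. Indeed, the gap-dependent scaling $\sigma_\eta^2(1-\phi^{2g_j})/(1-\phi^2)$ of the innovation variance in \eqref{Gap_SV3} is precisely what is engineered to produce these cancellations, and recognizing that this scaling is what renders the marginal law invariant to the irregular spacing is the conceptual heart of the argument.
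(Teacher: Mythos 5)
Your proof is correct, but it is organized differently from the paper's. The paper first unrolls the recursion into an explicit moving-average representation $x_{t_j}=\sum_{k=1}^{j-1}\phi^{\sum_{i=k+1}^{j}g_i}\eta_{t_k}+\eta_{t_j}$, computes $\operatorname{Cov}(x_{t_j},x_{t_{j+l}})$ by expanding the product of two such sums and telescoping, concludes weak stationarity from the fact that this depends only on the time separation, and only then recovers $\operatorname{Var}(h_{t_j})=\sigma_\eta^2/(1-\phi^2)$ by plugging stationarity back into the one-step recursion. You instead establish the marginal law $N\big(\mu,\sigma_\eta^2/(1-\phi^2)\big)$ first, by induction on $j$, using the one-step cancellation $\phi^{2g_j}\gamma^2+\sigma_\eta^2(1-\phi^{2g_j})/(1-\phi^2)=\gamma^2$, and then get the autocovariance by iterating the one-step relation $\operatorname{Cov}(h_{t_j},h_{t_{j-1}})=\phi^{g_j}\gamma^2$ and telescoping the exponents into $t_j-t_k$. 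Your route buys a shorter computation (no double sum), makes the Gaussianity of $h_{t_j}$ explicit via the affine-function-of-independent-Gaussians argument (the paper asserts normality without comment), and derives the variance without first assuming stationarity, which is logically a bit cleaner. The paper's route buys the explicit solution of the recursion in terms of all past innovations, which is a reusable object. One caveat common to both arguments: for non-integer gaps $g_j$ the powers $\phi^{g_j}$ are only real-valued when $\phi>0$, so the blanket claim that $|\phi|<1$ makes them ``well defined'' deserves a qualification; this is an issue with the model statement rather than with your proof.
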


\begin{proof}
Let $x_{t_{j}} = h_{t_{j}} - \mu , \forall j \in \mathbb{Z}^{+}$. From \eqref{Gap_SV1}-\eqref{Gap_SV3}, $x_{t_{j}}$ can be recursively written as  
\begin{equation}
    x_{t_{j}} = \sum_{k = 1}^{j-1} \phi^{\sum_{i= k+1}^{j} g_{i}} \eta_{t_{k}} + \eta_{t_{j}}
\end{equation}
Hence it follows that $E(x_{t_{j}}) = 0, \forall j \in \mathbb{Z}^{+}$. For $l=0,1,\cdots$, the gap time $l= t_{j+l} - t_{j} = \sum_{i=j+1}^{j+l} g_{i}$ and the covariance of $x_{t_{j}}$ and $x_{t_{j+l}}$ is
\begin{align*}
    Cov(x_{t_{j}},x_{t_{j+l}})& = \text{E} \left[\Bigg(\sum_{k = 1}^{j-1} \phi^{\sum_{i= k+1}^{j} g_{i}} \eta_{t_{k}} + \eta_{t_{j}}\Bigg) \Bigg(\sum_{k = 1}^{j+l-1} \phi^{\sum_{i= k+1}^{j+l} g_{i}} \eta_{t_{k}} + \eta_{t_{j+l}} \Bigg)\right] \\
    &= \sum_{k=1}^{j-1} \phi^{\sum_{i = k+1}^{j} g_{i} + \sum_{i = k+1}^{j+l} g_{i}} \text{E} \left[\eta^2_{t_{k}}\right] + \text{E} \left[\eta^2_{t_{j}}\right]\\
    &= \dfrac{\sigma_{\eta}^2}{1-\phi^2}  \left[\phi^{2\sum_{i=2}^{j} g_{i} + l} + \sum_{k=2}^{j-1} \phi^{2\sum_{i=k+1}^{j} g_{i}+l} (1-\phi^{2g_{k}}) + \phi^{l}(1-\phi^{2g_{j}})\right]\\
    &= \dfrac{\sigma_{\eta}^2 \phi^{l}}{1-\phi^2}  \left[\phi^{2\sum_{i=2}^{j}g_{i}} - \phi^{2\sum_{i=2}^{j}g_{i}} + \phi^{2\sum_{i=3}^{j}g_{i}} + \cdots + \phi^{2g_{j}} - \phi^{2g_{j}} + 1 \right] \\
    &= \dfrac{\sigma_{\eta}^2 \phi^{l}}{1-\phi^2} 
\end{align*}
Thus $Cov(x_{t_{j}},x_{t_{j+l}})$ is a function of gap time $l$ only and the series $x_{t_{j}}$ and hence $h_{t_{j}}$ is a weakly stationary process.
By stationarity of $h_{t_{j}}$ and (\ref{Gap_SV3}),
\begin{align*}
    &\text{Var}(h_{t_{j}}) = \phi^{2g_{j}}\text{Var}(h_{t_{j}}) + \text{Var}(\eta_{t_{j}})\\
    &\implies \text{Var}(h_{t_{j}}) = \dfrac{\sigma_{\eta}^2}{1-\phi^2}
\end{align*}
Also, $\text{E}(h_{t_{}j}) = \mu$ and hence the unconditional distribution of $h_{t_{j}}$ is
$\text{N}\Big(\mu,\frac{\sigma_{\eta}^2}{1-\phi^{2}}\Big)$.
\end{proof}

\begin{proposition}
We discuss properties of the distribution of $\{r_{t_j}\}$.    
\begin{enumerate}
    \item  The expectation and variance of the squared returns is
    \begin{align} \label{Eq:Var_IR_SV}
    &\text{E}\big(r_{t_j}^2 \big) =   \exp\Big( \mu + \frac{\sigma_{\eta}^2}{2(1-\phi^2)}\Big)\\ 
    &\text{Var}\big(r_{t_j}^2 \big) =   \exp\Big(2\mu +\dfrac{\sigma^2_{\eta}}{1-\phi^2}\Big)\Bigg(3\exp\Big(\dfrac{\sigma^2_{\eta}}{1-\phi^2}\Big) - 1\Bigg) 
\end{align}
    \item Let the gap time $l= t_{j+l} - t_{j} = \sum_{i=j+1}^{j+l} g_{i}$, $l=0,1,\cdots$, the autocovariance function of $r_{t_{j}}^2$ and $r_{t_{j+l}}^2$ is
    \begin{align} \label{Eq:Cov_IR_SV}
    \text{Cov}\big(r_{t_j}^2,r_{t_{j+l}}^2 \big) =   \exp\Big( 2\mu + \frac{\sigma_{\eta}^2}{1-\phi^2}\Big) \left[\exp\Big(\dfrac{\sigma_{\eta}^2 \phi^{l}}{1-\phi^2}\Big) - 1\right]\\ 
    \end{align}

\item The kurtosis of the returns is
\begin{equation} \label{Eq:kurtosis_IR_SV}
    \text{K}\big(r_{t_j}\big) =  3\exp\Big(\dfrac{\sigma^2_{\eta}}{1-\phi^2}\Big)
\end{equation}

\item The sequence of squared returns $r^{2}_{t_j}$  is a stationary process.

\end{enumerate}
\end{proposition}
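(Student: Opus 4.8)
The plan is to exploit the conditionally log-normal structure of the returns, reducing every claim to a moment computation for the log-normal distribution via the Gaussian moment generating function (MGF). Throughout I would invoke the standard stochastic-volatility assumptions that $\epsilon_{t_j}$ is independent of the volatility process $\{h_{t_j}\}$ and that the $\epsilon_{t_j}$ are mutually independent, together with the fact established in Proposition 1 that $h_{t_j} \sim N\big(\mu,\, \sigma_\eta^2/(1-\phi^2)\big)$. Writing $\sigma_h^2 = \sigma_\eta^2/(1-\phi^2)$ for brevity, the engine of each calculation is the identity $\text{E}\big[\exp(a\,h_{t_j})\big] = \exp\big(a\mu + a^2\sigma_h^2/2\big)$ for the MGF of a normal variate.

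For Part 1, I would write $r_{t_j}^2 = \exp(h_{t_j})\,\epsilon_{t_j}^2$ and factor the expectation by independence. Since $\text{E}[\epsilon_{t_j}^2]=1$, the mean of the squared return equals $\text{E}[\exp(h_{t_j})]$, which is the MGF evaluated at $a=1$. For the variance I would compute the fourth moment $\text{E}[r_{t_j}^4] = \text{E}[\exp(2h_{t_j})]\,\text{E}[\epsilon_{t_j}^4]$, using $\text{E}[\epsilon_{t_j}^4]=3$ and the MGF at $a=2$, then subtract the square of the mean and factor out $\exp\big(2\mu + \sigma_h^2\big)$ to reach the stated expression.

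For Part 2, the new ingredient is the joint behavior across times. For $l\geq 1$, independence of the innovations gives $\text{E}[r_{t_j}^2\, r_{t_{j+l}}^2] = \text{E}\big[\exp(h_{t_j}+h_{t_{j+l}})\big]$. The pair $(h_{t_j}, h_{t_{j+l}})$ is bivariate normal with common mean $\mu$, common variance $\sigma_h^2$, and—crucially—covariance $\sigma_h^2\,\phi^{l}$, which is precisely the autocovariance $\sigma_\eta^2\phi^l/(1-\phi^2)$ derived in the proof of Proposition 1. Hence $h_{t_j}+h_{t_{j+l}}$ is normal with mean $2\mu$ and variance $2\sigma_h^2(1+\phi^{l})$; applying the MGF at $a=1$ and subtracting $\text{E}[r_{t_j}^2]\,\text{E}[r_{t_{j+l}}^2]$ yields the result after factoring out $\exp\big(2\mu + \sigma_h^2\big)$.

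For Part 3, since $\text{E}[r_{t_j}]=\text{E}[\exp(h_{t_j}/2)]\,\text{E}[\epsilon_{t_j}]=0$, the kurtosis is $\text{E}[r_{t_j}^4]/\big(\text{E}[r_{t_j}^2]\big)^2$, and the two moments from Part 1 cancel the $\exp(2\mu)$ factor and one power of $\exp(\sigma_h^2)$, leaving $3\exp(\sigma_h^2)$. Part 4 then follows immediately, since Parts 1 and 2 show that $\text{E}[r_{t_j}^2]$ and $\text{Var}(r_{t_j}^2)$ do not depend on $j$ and that $\text{Cov}(r_{t_j}^2, r_{t_{j+l}}^2)$ depends only on the gap time $l$, which is exactly weak stationarity. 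I do not anticipate a genuine obstacle here: every step is a log-normal moment evaluation. The only care required is bookkeeping—correctly reusing the $\phi^{l}$ covariance from Proposition 1 in the cross-moment, and being explicit about the independence assumptions that permit the expectations to factor.
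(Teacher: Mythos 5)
Your proposal is correct and follows essentially the same route as the paper's proof: log-normal moment evaluations via the Gaussian MGF for the mean, fourth moment, variance, and kurtosis, and the bivariate normality of $(h_{t_j}, h_{t_{j+l}})$ with covariance $\sigma_\eta^2\phi^{l}/(1-\phi^2)$ to obtain the autocovariance of the squared returns, with stationarity read off from the resulting expressions. No substantive differences to note.
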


\begin{proof}
The first two even moments of $r_{t_j}$ are
\begin{equation*}
\begin{aligned}
    \text{E}\big(r_{t_j}^2 \big) = \text{E}\big(\exp(h_{t_j}) \epsilon_{t_j}^2\big) =\text{E}\big(\exp(h_{t_j})\big) \text{E}\big(\epsilon_{t_j}^2\big) = \exp\Big( \mu + \frac{\sigma_{\eta}^2}{2(1-\phi^2)}\Big),\\
    \text{E}\big(r_{t_j}^4 \big) = \text{E}\big(\exp(2h_{t_j}) \epsilon_{t_j}^4\big) =\text{E}\big(\exp(2h_{t_j}\big) \text{E}\big(\epsilon_{t_j}^4\big) = 3\exp\Big( 2\mu + \frac{ 2\sigma_{\eta}^2}{1-\phi^{2}}\Big).
\end{aligned}
\end{equation*}
Then, 
\begin{equation*}
\begin{aligned}
    \text{Var}\big(r_{t_j}^2 \big) = & 3\exp\Big( \mu + \frac{ 3\sigma_{\eta}^2}{2(1-\phi^{2})}\Big) - \exp\Big( 2\mu + \frac{\sigma_{\eta}^2}{(1-\phi^2)}\Big)\\
    =&\exp\Big(2\mu + \dfrac{\sigma_{\eta}^2}{1-\phi^2}\Big)\Big(3\exp\Big(\dfrac{\sigma_{\eta}^2}{1-\phi^2}\Big) - 1\Big), 
\end{aligned}
\end{equation*}
and the kurtosis is 
\begin{equation*}
    \text{K}\big(r_{t_j}\big) = \dfrac{\text{E}\big(r_{t_j}^4 \big)}{\text{E}\big(r_{t_j}^2 \big)^2} = 3 \exp\Big(\frac{\sigma_{\eta}^2}{1-\phi^{2}}\Big).
\end{equation*}
The kurtosis is always greater than $3$ as long as $\sigma_{\eta}^2 > 0$.
For $l=0,1,\cdots$, the autocovariance function of $r_{t_{j}}^2$ is
\begin{equation*}
    \begin{aligned}
        \text{Cov}(r^{2}_{t_{j+l}},r^{2}_{t_{j}}) 
        &= \text{E}(r^{2}_{t_{j+l}}r^{2}_{t_{j}}) - \text{E}(r^{2}_{t_{j+l}})\text{E}(r^{2}_{t_{j}})  \\
        & = \text{E} (\exp(h_{t_{j}} + h_{t_{j+l}}) \epsilon_{t_{j}}^2 \epsilon_{t_{j+l}}^2) - \exp\Big( 2\mu + \dfrac{\sigma_{\eta}^2}{1-\phi^2}\Big)
    \end{aligned}
\end{equation*}
By independence of $\epsilon_{t_{j}}$'s and $h_{t_{j}}$'s, we have 
\begin{equation}\label{Eq:Autocovariance_IR_SV}
    \begin{aligned}
        \text{Cov}(r^{2}_{t_{j+l}},r^{2}_{t_{j}}) &=\text{E} (\exp(h_{t_{j}} + h_{t_{j+l}})) - \exp\Big( 2\mu + \dfrac{\sigma_{\eta}^2}{1-\phi^2}\Big).
    \end{aligned}
\end{equation}
We note that 
\begin{align*}
    \text{Var}(h_{t_{j}} + h_{t_{j+l}}) &= \text{Var}(h_{t_{j}}) + \text{Var}(h_{t_{j+l}}) + 2\text{Cov}(h_{t_{j}}, h_{t_{j+l}})\\
    & = \dfrac{2\sigma_{\eta}^2 (1+\phi^{l})}{1-\phi^2}
\end{align*}
By normality of $h_{t_{j}}$, it follows that $h_{t_{j}} + h_{t_{j+l}} \sim N\Big(2\mu, \dfrac{2\sigma_{\eta}^2 (1+\phi^{l})}{1-\phi^2}\Big)$.
Therefore (\ref{Eq:Autocovariance_IR_SV}) becomes
\begin{align}\label{Eq:Autocovariance_IR_SV1}
        \text{Cov}(r^{2}_{t_{j+l}},r^{2}_{t_{j}}) 
        &= \exp\Big( 2\mu + \frac{\sigma_{\eta}^2}{1-\phi^2}\Big) \left[\exp\Big(\dfrac{\sigma_{\eta}^2 \phi^{l}}{1-\phi^2}\Big) - 1\right].
\end{align}
Therefore, $r_{t_{j}}^2$ is a stationary process.
\end{proof}


\subsection{Bayesian Estimation of the IR-SV Model} \label{sec:IR-SV_Bayesian}

Let $\boldsymbol{r} = (r_{t_{1}},\cdots,r_{t_{T}})$ be the set of irregularly spaced observed returns following the IR-SV model in 
\eqref{Gap_SV1}-\eqref{Gap_SV3}. Let $\boldsymbol{\theta} = (\sigma_{\eta}^2,\phi,\mu)$ be the set of model parameters. 
For $j=1,\cdots,T$, we can rewrite 
\eqref{Gap_SV1}-\eqref{Gap_SV3} as follows: 

\begin{align}
         r_{t_{j}}|h_{t_{j}},\mathbf{\theta}  & \sim N\big(0, \exp(h_{t_{j}})\big) \label{Eq:Conditional dist SV1}\\
         h_{t_{1}}  & \sim N\big(\mu,\dfrac{\sigma^2_{\eta}}{1-\phi^2}\big)  \label{Eq:Conditional dist SV2} \\   
         h_{t_{j}}|h_{t_{j-1}}  & \sim N\Big(\mu + \phi^{g_{j}}  (h_{t_{j-1}} - \mu) ,\dfrac{\sigma^2_{\eta} (1-\phi^{2g_{j}})}{1-\phi^2}\Big), \quad j>1. \label{Eq:Conditional dist SV3}    
\end{align}
We assume the following priors:
\begin{align*}
        &\dfrac{\phi+1}{2} \sim B(20,1.5),\\
        &\dfrac{1}{\sigma_{\eta}^2} \sim G(2.5,0.025),\\ 
        &\mu \sim N(0,10), \label{Eq:IR-SV_priors}
\end{align*}
where $B$ denotes the beta distribution, $G$ denotes the gamma distribution and $N$ is the normal distribution. 
Let $p(\boldsymbol{\theta}) = p(\phi) p(\sigma^2_{\eta}) p(\mu)$ denote the joint prior of all the parameters, assuming independence. The joint posterior distribution of the hidden volatility states $\boldsymbol{h} =(h_{t_{1}},\cdots,h_{t_{T}})$ and the parameters $\boldsymbol{\theta} = (\sigma_{\eta}^2,\phi,\mu)$ is obtained by Bayes' rule as
\begin{equation}
    p(\boldsymbol{h},\boldsymbol{\theta}| \boldsymbol{r}) \propto p(\boldsymbol{\theta}) \prod_{j=1}^{T} p(r_{t_{j}}|h_{t_{j}},\boldsymbol{\theta}) p(h_{t_{j}}|h_{t_{j-1}},\boldsymbol{\theta}),
\end{equation}
where 
$p(r_{t_{j}}|h_{t_{j}},\boldsymbol{\theta})$ and  $p(h_{t_{j}}|h_{t_{j-1}},\boldsymbol{\theta})$ follows from \eqref{Eq:Conditional dist SV1}-\eqref{Eq:Conditional dist SV3}.
We employ a Metropolis-Hastings adaptive random-walk sampler for each of the parameters with a univariate normal proposal distribution within \textit{R} package \texttt{NIMBLE} package \citep{de2017programming} to generate samples from their respective posterior distribution.

\subsection{Simulation Study}\label{IR-SV_simulation}
We demonstrate the accuracy of the Bayesian estimation of the IR-SV model parameters under a few different simulation setups. We generated $100$ sets (replicates) of zero mean log returns data, each of length $T=5000$.
The gap times $g_{j}, 1 \leq j \leq T$ are generated as follows:
\begin{itemize}
    \item Generate $g_{j} \sim \mathcal{P}(\lambda = 3)$ and $g_{j} >0, \forall \ j$ where $\mathcal{P}(\lambda)$ is the Poisson distribution with mean $\lambda$. 
    \item Scale the $g_{j}$'s such that $0< g_{j} \leq 1 ,\forall j$
\end{itemize}
In our empirical analysis we have observed the mean gap times to be around 3 seconds with $10\%$ of them greater than 5 seconds and since the persistence parameter $|\phi|<1$, we scale the gap times such that $g_{j} \in (0,1]$ to ensure $\phi^{g_{j}}$ is bounded away from 0, $\forall j$. To study the effect of persistence parameter $\phi$ on the estimation accuracy, in all three scenarios we generated zero mean log returns from the IR-SV model in 
(\ref{Gap_SV1}-\ref{Gap_SV3}) with the true parameter values of $\mu = -9$, $\sigma =0.8$. In scenario 1, the true value of the persistence parameter $\phi$ is 0.2, in scenario 2, true value of $\phi$ is 0.6 and in scenario 3, true value of $\phi$ is 0.9.  
\begin{table}[H]
\begin{center}  
\begin{tabular}{ccccc}
\hline \hline
Parameter & True Value & Mean & Q(2.5\%) & Q(97.5\%) \\
\hline \hline
$\mu$ & -9.0000 & -8.9997 & -9.1117 & -8.8627 \\
$\phi$ & 0.2000 & 0.2438 & 0.1565 & 0.3346 \\
$\sigma$ & 0.8000 & 0.7918 & 0.7366 & 0.8551 \\
\hline \hline
\end{tabular}
\end{center}
\caption{True values and posterior estimates of parameters for Scenario 1 from the IR-SV model.}
\label{Table:Simulation_Scenario_1_IR_SV}
\end{table}
\begin{table}[H]
\begin{center}  
\begin{tabular}{ccccc}
\hline \hline
Parameter & True Value & Mean & Q(2.5\%) & Q(97.5\%) \\
\hline \hline
$\mu$ & -9.0000 & -8.9833 & -9.2253 & -8.7326 \\
$\phi$ & 0.6000 & 0.6253 & 0.5515 & 0.7056 \\
$\sigma$ & 0.8000 & 0.7832 & 0.7026 & 0.8769 \\
\hline \hline
\end{tabular}
\end{center}
\caption{True values and posterior estimates of parameters for Scenario 2 from the IR-SV model.}
\label{Table:Simulation_Scenario_2_IR_SV}
\end{table}
\begin{table}[H]
\begin{center}  
\begin{tabular}{ccccc}
\hline \hline
Parameter & True Value & Mean & Q(2.5\%) & Q(97.5\%) \\
\hline \hline
$\mu$ & -9.0000 & -8.9872 & -9.9697 & -8.0021 \\
$\phi$ & 0.9000 & 0.8960 & 0.8365 & 0.9458 \\
$\sigma$ & 0.8000 & 0.7860 & 0.7059 & 0.8818 \\ 
\hline \hline
\end{tabular}
\end{center}
\caption{True values and posterior estimates of parameters for Scenario 3 from the IR-SV model.}
\label{Table:Simulation_Scenario_3_IR_SV}
\end{table}
We run 520,000 MCMC iterations and discard the first 20,000 as burn-in and thinned every $1000^{th}$ sample to reduce autocorrelation between MCMC samples. Convergence for the parameters are assessed using trace and posterior density plots, \textbf{autocorrelation and thinning} (not shown here).
In Table \ref{Table:Simulation_Scenario_1_IR_SV}, Table \ref{Table:Simulation_Scenario_2_IR_SV}, and Table \ref{Table:Simulation_Scenario_3_IR_SV}, we report the posterior sample means of the parameters along with their true values and $95\%$ credible intervals, averaged the 100 data sets (replicates). The  true values of the parameters lie inside the $95\%$ credible intervals for all parameters in all three scenarios.

\section{Multivariate Stochastic Volatility Models for Irregularly Spaced Financial Time Series (IR-MSV)}


We propose a multivariate stochastic volatility model to fit irregularly spaced synchronized intraday zero mean log returns for multiple assets by modifying the basic multivariate stochastic volatility (BMSV) model \citep{chib2009multivariate} and using the gap time idea of \cite{nagaraja2011autoregressive} for handling the latent state.

\subsection{Refresh Time Sampling}\textcolor{red}{Why synchronization is necessary?}
In this section we describe refresh time sampling which allows synchronization of high frequency prices from multiple stocks. In high frequency trading (HFT) conducting multivariate analysis is difficult since assets do not trade on a fixed grid, trades and quotes don't arrive synchronously. Hence synchronization of HFT data from multiple assets is necessary. As in Section (2.5.3), we briefly describe the refresh sampling procedure for synchronization of high frequency prices from multiple stocks.

Assume there are $p$ stocks, and trading time of the $i^{th}$ stock is given by $t_{i\ell}$, $\ell=1,\cdots,n_i$ , $i=1,\cdots,p$. For a given time $t$, define $N_{t}^{i}=$ the number of $t_{i \ell} \leq t$, $\ell=1, \cdots, n_{i}$, which counts the number of distinct data points $t_{i \ell}$ available for the $i^{th}$ asset up to time $t$. 
The first refresh time is defined as $\tau_{1}=\max \left\{t_{11}, \ldots, t_{p 1}\right\}$, which is the first time taken to trade all assets and refresh their posted prices. The subsequent refresh times are defined as follows. Given the $j^{th}$ refresh time $\tau_{j}$, define the $(j+1)^{th}$ refresh time
\begin{equation}
 \tau_{j+1}=\max \left\{t_{1, N_{\tau_{j}}^{1}+1}, \ldots, t_{p, N_{\tau_{j}}^{p}+1}^{p}\right\}
\end{equation}
Suppose there are $m$ refresh time points $\tau_{1}, \cdots, \tau_{m}$. Intuitively, $\tau_{2}$ is the second time when all the assets are traded and their prices are refreshed.  
In our empirical analysis we consider three health sector stocks BMY, CVS and MDT traded on NYSE on $24^{th}$ June, $2016$. We consider the high frequency prices for these three stocks and illustrate the refresh time sampling procedure for synchronization of multivariate analysis which we discuss later. Figure \ref{fig:refresh_time_sampling} illustrates the refresh time sampling idea and in this example we have $\tau_1 = 09:45:01.00$,$\tau_2 = 09:45:02.00$, $\tau_3 = 09:45:03.00$, $\tau_4 = 09:45:06.00$ and $\tau_5 = 09:45:07.00$ are the first five refresh times. 
\begin{figure}[H]
    \centering
    \includegraphics[scale=0.70]{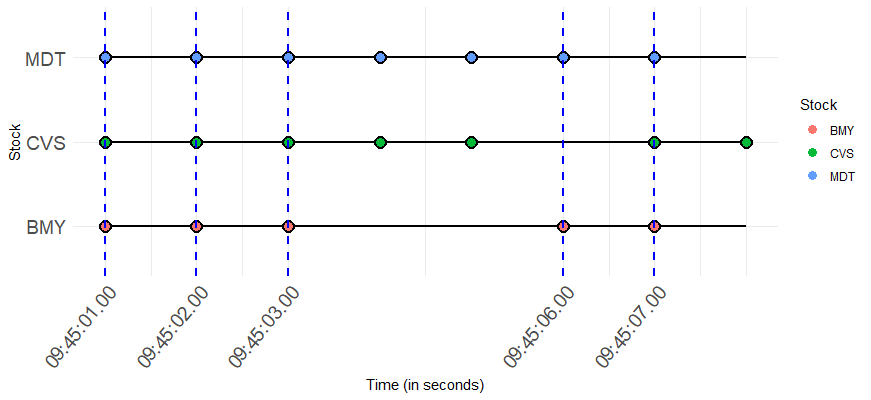}
    \caption{ Refresh Time Sampling for BMY, CVS and MDT. The vertical dotted blue lines represent the refresh time points on $24^{th}$ June, 2016}
    \label{fig:refresh_time_sampling}
\end{figure}

\subsection{IR-MSV Model Formulation}

Let $\boldsymbol{r}_{t_{j}} = (r_{1t_{j}},r_{2t_{j}},\cdots,r_{pt_{j}})^{\prime}$, $j = 1,\cdots,T$ be a $p$-dimensional vector of zero mean log returns of $p$ financial assets observed at irregularly spaced time points $t_{1},\cdots t_{T}$. Here, $t_{j}$ denotes the time 
of the $j^{th}$ observation, and $g_{j} = t_{j} - t_{j-1}, j>1$ denotes the known gap time between consecutive observations. We introduce irregular multivariate stochastic volatility (IR-MSV) model as
\begin{equation}\label{Eq:Irregular_MSV}
        \boldsymbol{r}_{t_{j}} = \mathbf{H}_{t_{j}}^{1/2} \boldsymbol{\epsilon_{t_{j}}}, \quad \boldsymbol{\epsilon_{t_{j}}} \sim \text{MVN}(\boldsymbol{0},\mathbf{R}),
\end{equation}
where,
\begin{equation}\label{Eq:Irregular_MSV1}
    \begin{aligned}
    &\mathbf{H}_{t_{j}} = \text{diag}\big(\exp(h_{1,t_{j}}),\cdots, \exp(h_{p,t_{j}})\big), \quad j=1,\cdots,T, \\
     & h_{i,t_{1}} = \mu _{i} + \eta_{i,t_{1}},\quad \eta_{i,t_{1}} \sim \text{N}\bigg(0,\dfrac{\sigma_{i}^2}{1-\phi_{i}^2}\bigg), \quad i=1,\cdots,p,\\
    & h_{i,t_{j}} = \mu _{i} + \phi_{i}^{g_j} (h_{i,t_{j-1}} - \mu _{i}) + \eta_{i,t_{j}}, \quad \eta_{i,t_{j}} \sim \text{N}\bigg(0,\dfrac{\sigma_{i}^2 (1-\phi_{i}^{2g_{j}})}{1-\phi_{i}^2}\bigg), \quad i=1,\cdots,p; j=1,\cdots,T,
    \end{aligned}
\end{equation}
where $\phi_{i}$ is the persistence parameter constrained by $|\phi_{i}| < 1$, $i=1,\cdots,p$, $\mu_{i}$ is the location parameter for each asset $i=1,\cdots,p$ and $\mathbf{R}$ is the correlation matrix of the observation errors. Following \cite{nagaraja2011autoregressive} and our discussion in the previous section, we observe that $\{h_{i,t_{j}}\}$ is a stationary process for each $i = 1,\cdots,p$. Since the error correlation matrix $\mathbf{R}$ is assumed to remain constant over time, we refer to (\ref{Eq:Irregular_MSV}) as a constant correlation IR-MSV model. 

\begin{proposition}
Let $\boldsymbol{r}^2_{t_{j}} = (r^2_{1t_{j}},r^2_{2t_{j}},\cdots,r^2_{pt_{j}})^{\prime}$, $j = 1,\cdots,T$ be the vector of squared returns. We discuss properties of the distribution of $\{\boldsymbol{r}_{t_j}^2\}$.     
\begin{enumerate}
    \item Let $\text{E}(\boldsymbol{r}^2_{t_{j}}) = \boldsymbol{m} = (m_{1},\cdots,m_{p})'$ be the $p\times 1$ vector of expected values where $m_{i}$ is
    \begin{align} \label{Eq:Var_IR_MSV}
    &m_{i} = \text{E}\big(r_{i,t_j}^2 \big) =   \exp\Big( \mu_{i} + \frac{\sigma_{i}^2}{2(1-\phi_{i}^2)}\Big), \quad i=1,\cdots,p\\ 
\end{align}
    \item Let $\text{Cov}(\boldsymbol{r}^2_{t_{j}}) = \boldsymbol{\Sigma}$ be the $p \times p$ variance-covariance matrix of squared returns vector. The $(i,k)^{th}$ element of $\boldsymbol{\Sigma}$ is
\begin{equation}
\text{Cov}(r_{i,t_{j}}^2,r_{k,t_{j}}^2)=\boldsymbol{\Sigma}_{ik}=
    \begin{cases}
        \exp\Big(2\mu_{i} +\dfrac{\sigma^2_{i}}{1-\phi_{i}^2}\Big)\Bigg(3\exp\Big(\dfrac{\sigma^2_{i}}{1-\phi_{i}^2}\Big) - 1\Bigg) &  i=k\\
         2\rho_{ik}^2 \exp \left[(\mu_{i} + \mu_{k}) + \dfrac{1}{2} \Big(\dfrac{\sigma_{i}^2}{1-\phi_{i}^2} + \dfrac{\sigma_{k}^2}{1-\phi_{k}^2}\Big)\right]& i \neq k
    \end{cases}
\end{equation}
\end{enumerate}
\end{proposition}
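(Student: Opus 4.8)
The plan is to reduce both parts to the univariate results in Proposition 2 wherever possible, and to handle the off-diagonal covariance---the only genuinely multivariate quantity---through the fourth-moment structure of the correlated observation errors.

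For Part 1 and the diagonal entries of Part 2, I would observe that each component satisfies $r_{i,t_{j}} = \exp(h_{i,t_{j}}/2)\,\epsilon_{i,t_{j}}$, where $h_{i,t_{j}}$ is a univariate gap-time AR process of exactly the form analyzed earlier and $\epsilon_{i,t_{j}}$ is marginally $N(0,1)$. The preceding proposition (applied componentwise) gives the unconditional law $h_{i,t_{j}} \sim N(\mu_{i}, \sigma_{i}^2/(1-\phi_{i}^2))$, and then the expressions for $m_{i} = \text{E}(r_{i,t_{j}}^2)$ and for $\text{Var}(r_{i,t_{j}}^2) = \boldsymbol{\Sigma}_{ii}$ follow verbatim from the $\text{E}(r_{t_{j}}^2)$ and $\text{Var}(r_{t_{j}}^2)$ formulas of Proposition 2 after relabeling the parameters.

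The off-diagonal entries are the main point. I would write $\boldsymbol{\Sigma}_{ik} = \text{E}(r_{i,t_{j}}^2 r_{k,t_{j}}^2) - m_{i} m_{k}$ and expand $r_{i,t_{j}}^2 r_{k,t_{j}}^2 = \exp(h_{i,t_{j}}) \exp(h_{k,t_{j}})\,\epsilon_{i,t_{j}}^2 \epsilon_{k,t_{j}}^2$. Using the independence of the latent volatility processes across assets (the only cross-sectional dependence in this constant-correlation specification enters through $\mathbf{R}$) together with their independence from the observation errors, the expectation factorizes as $\text{E}(\exp(h_{i,t_{j}}))\,\text{E}(\exp(h_{k,t_{j}}))\,\text{E}(\epsilon_{i,t_{j}}^2 \epsilon_{k,t_{j}}^2)$. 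The first two factors are again lognormal means equal to $m_{i}$ and $m_{k}$.

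The key remaining step is the fourth mixed moment of the pair $(\epsilon_{i,t_{j}}, \epsilon_{k,t_{j}})$, which is bivariate normal with unit variances and correlation $\rho_{ik} = \mathbf{R}_{ik}$. By Isserlis' (Wick's) theorem, $\text{E}(\epsilon_{i,t_{j}}^2 \epsilon_{k,t_{j}}^2) = \text{E}(\epsilon_{i,t_{j}}^2)\text{E}(\epsilon_{k,t_{j}}^2) + 2\,\text{Cov}(\epsilon_{i,t_{j}}, \epsilon_{k,t_{j}})^2 = 1 + 2\rho_{ik}^2$. Substituting yields $\text{E}(r_{i,t_{j}}^2 r_{k,t_{j}}^2) = m_{i} m_{k} (1 + 2\rho_{ik}^2)$, so that $\boldsymbol{\Sigma}_{ik} = 2\rho_{ik}^2 m_{i} m_{k}$; writing $m_{i} m_{k}$ as a single exponential gives the stated formula. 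The one subtlety I would flag is that the factorization hinges on cross-asset independence of the $\{h_{i,t_{j}}\}$, which should be recorded as an explicit modeling assumption, since otherwise a residual term involving $\text{Cov}(\exp(h_{i,t_{j}}), \exp(h_{k,t_{j}}))$ would survive and the closed form would no longer hold.
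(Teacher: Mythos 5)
Your proof is correct; note that the paper itself leaves the proof environment for this proposition empty, so there is no authors' argument to compare against --- your write-up actually supplies the missing proof. The reduction of Part~1 and the diagonal of Part~2 to the univariate Proposition applied componentwise is exactly right, since each $h_{i,t_j}$ is marginally $N\big(\mu_i,\sigma_i^2/(1-\phi_i^2)\big)$ and each $\epsilon_{i,t_j}$ is marginally $N(0,1)$. The off-diagonal computation is also sound: the factorization $\text{E}\big(\exp(h_{i,t_j}+h_{k,t_j})\,\epsilon_{i,t_j}^2\epsilon_{k,t_j}^2\big)=\text{E}\big(\exp(h_{i,t_j})\big)\text{E}\big(\exp(h_{k,t_j})\big)\text{E}\big(\epsilon_{i,t_j}^2\epsilon_{k,t_j}^2\big)$ combined with Isserlis' identity $\text{E}(\epsilon_{i}^2\epsilon_{k}^2)=1+2\rho_{ik}^2$ gives $\boldsymbol{\Sigma}_{ik}=2\rho_{ik}^2 m_i m_k$, which is the stated formula. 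Your flag about cross-asset independence of the latent processes $\{h_{i,t_j}\}$ is well taken: the model equations in \eqref{Eq:Irregular_MSV1} specify the $\eta_{i,t_j}$ only marginally and never state that they are independent across $i$, yet without that assumption a residual term $\text{Cov}\big(\exp(h_{i,t_j}),\exp(h_{k,t_j})\big)\,(1+2\rho_{ik}^2)$ survives and the closed form fails; this assumption should be made explicit in the model statement.
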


\begin{proof}

\end{proof}

\subsection{Bayesian Inference}
In this section, we describe Bayesian analysis for fitting the IR-MSV model in (\ref{Eq:Irregular_MSV}) to irregularly spaced multiple financial time series. We show the likelihood, prior, posterior and also discuss the computational aspects using MCMC algorithms. 
The likelihood function is 
\begin{equation}
    \mathcal{L}(\boldsymbol{\Theta}|\boldsymbol{r}_{t_{1}},\cdots,\boldsymbol{r}_{t_{T}}) = |\mathbf{\Sigma}_{\boldsymbol{r}}|^{-T/2}\prod_{j = 1}^{T}   \exp\Bigg(\dfrac{-\boldsymbol{r}_{t_{j}} \mathbf{\Sigma}_{\boldsymbol{r}}^{-1} \boldsymbol{r}_{t_{j}}^{\prime}}{2}\Bigg),
\end{equation}
where $\mathbf{\Sigma}_{\mathbf{r}} = \mathbf{H}_{t_{j}}^{1/2} \mathbf{R} \mathbf{H}_{t_{j}}^{1/2}$ ,  $\boldsymbol{\sigma} = (\sigma_{1},\cdots,\sigma_{p})$ and $\boldsymbol{\phi} = (\phi_{1},\cdots,\phi_{p})$, and 
$\boldsymbol{\Theta} = (\boldsymbol{\sigma}, \boldsymbol{\phi},\mathbf{R},\mu)$. 

For $i=1,\cdots,p$, we assume the following priors:
\begin{equation*}
\begin{aligned}
        &\mu_{i} \sim \text{N}(0,10), \\
        &\dfrac{1}{\sigma_{i}^2} \sim \text{G}(2.5,0.025), \\
        & \mathbf{R} \sim \text{LKJ}(\eta = 1.2), \text{ where }         \text{LKJ}(\mathbf{\Sigma}|\eta) \propto \det(\mathbf{\Sigma})^{(\eta - 1)}, \\ & \phi_{i} \sim \text{N}(0,0.5), 
\end{aligned}
\end{equation*}
where Lewandowski-Kurowicka-Joe (LKJ) distribution \cite{lewandowski2009generating} is a useful prior for correlation matrices, and IG denotes the inverse gamma distribution. 

The posterior distribution of $\boldsymbol{\Theta}$ is
\begin{equation}
    \begin{aligned}
        \pi(\boldsymbol{\Theta}| \boldsymbol{r}_{t_{1}},\cdots,\boldsymbol{r}_{t_{T}}) \propto & \quad |\mathbf{\Sigma}_{\boldsymbol{r}}|^{-T/2}\prod_{j = 1}^{T}   \exp\Bigg(\dfrac{-\boldsymbol{r}_{t_{j}} \mathbf{\Sigma}_{\boldsymbol{r}}^{-1} \boldsymbol{r}_{t_{j}}^{\prime}}{2}\Bigg) \\ &\times
        \prod_{j=1}^{T} \pi(\boldsymbol{h}_{t_{j}}|\boldsymbol{h}_{t_{j-1}},\boldsymbol{\sigma},\boldsymbol{\phi}) \times \pi(\boldsymbol{\sigma}) \pi(\boldsymbol{\phi}) \pi(\mathbf{R}) \pi(\mu), 
    \end{aligned}
\end{equation}
where we have assumed independence of the priors and $\pi(\boldsymbol{h}_{t_{j}}|\boldsymbol{h}_{t_{j-1}},\boldsymbol{\sigma},\boldsymbol{\phi})$ is the joint distribution of the stochastic volatilities as in  (\ref{Eq:Irregular_MSV1}). We employ a block random walk sampler for the correlation matrix $\mathbf{R}$ and for other parameters i.e. $(\boldsymbol{\sigma},\boldsymbol{\phi},\mu)$ we employ a Metropolis-Hastings adaptive random-walk sampler with a univariate normal proposal distribution, which are the default samplers in \texttt{NIMBLE} package \citep{de2017programming} in R.

\subsection{Simulation Study}
In this simulation our goal is to examine the effect of the correlation among the components on the accuracy of estimation. We generated 100 sets of zero mean log returns data from the IR-MSV model represented by equations (\ref{Eq:Irregular_MSV}) and (\ref{Eq:Irregular_MSV1}) with $p=3$ and $T=5000$. The gap times $g_{j}, 1 \leq j \leq T$ are generated using a similar procedure as described in Section~\ref{IR-SV_simulation}. The $3 \times 3$ correlation matrix has the following representation
\begin{equation*}
    \mathbf{R} = \begin{pmatrix}
        1 & \rho_{12} & \rho_{13}\\
        \rho_{12} & 1 & \rho_{23} \\
        \rho_{13} & \rho_{23} & 1
\end{pmatrix}
\end{equation*}
We describe three different scenarios. In scenario 1, we consider moderate positive correlations among the components with $\rho_{12} = 0.6$, $\rho_{13} = 0.4$ and $\rho_{23} = 0.2$. In scenario 2, we consider a mix of high and low positive correlation and a negative correlation with $\rho_{12} = - 0.4$, $\rho_{13} = 0.7$ and $\rho_{23} = 0.3$. In scenario 3, we consider all correlations to be equal, high and positive with $\rho_{12} = \rho_{13} = \rho_{23} = 0.7$.

We run 210000 MCMC iterations and discard the first 10000 as burn-in and thinned every $400^{th}$ sample to reduce autocorrelation between MCMC samples. We assessed the convergence for the parameters using trace and posterior density plots.
In Table
\ref{Table:Simulation_Scenario_1_IR_MSV}, Table \ref{Table:Simulation_Scenario_2_IR_MSV}, and Table \ref{Table:Simulation_Scenario_3_IR_MSV} we report the posterior sample mean along with their true values and $95\%$ credible intervals averaged the 100 data sets for scenario 1, scenario 2 and scenario 3 respectively.  

\begin{table}[H]
\begin{center}  
\begingroup
\setlength{\tabcolsep}{6pt} 
\renewcommand{\arraystretch}{0.6} 
\begin{tabular}{ccccc}
\hline \hline
Parameter & True Value & Mean & Q(2.5\%) & Q(97.5\%) \\
\hline \hline
$\rho_{12}$ & 0.6000 & 0.5987 & 0.5740 & 0.6158 \\
$\rho_{13}$ & 0.4000 & 0.3987 & 0.3724 & 0.4286 \\
$\rho_{23}$ & 0.2000 & 0.2008 & 0.1740 & 0.2254 \\
$\sigma^2_{1}$ & 1.0000 & 0.9831 & 0.8639 & 1.1203 \\
$\sigma^2_{1}$ & 0.8000 & 0.7807 & 0.6445 & 0.9119 \\
$\sigma^2_{3}$ & 0.5000 & 0.5658 & 0.4606 & 0.6636 \\
$\mu_{1}$ & -9.0000 & -8.9964 & -9.3681 & -8.6911 \\
$\mu_{2}$ & -9.5000 & -9.4927 & -9.6809 & -9.3255 \\
$\mu_{3}$ & -8.5000 & -8.5018 & -8.5973 & -8.3994 \\
$\phi_{1}$ & 0.7000 & 0.7021 & 0.6374 & 0.7694 \\
$\phi_{2}$ & 0.5000 & 0.4945 & 0.4119 & 0.5651 \\
$\phi_{3}$ & 0.3000 & 0.3009 & 0.2181 & 0.3943 \\
\hline \hline
\end{tabular}
\endgroup
\end{center}
\caption{True values and posterior estimates of parameters for Scenario 1 from the IR-MSV model.}
\label{Table:Simulation_Scenario_1_IR_MSV}
\end{table}

\begin{table}[H]
\begin{center}  
\begingroup
\setlength{\tabcolsep}{6pt} 
\renewcommand{\arraystretch}{0.6} 
\begin{tabular}{ccccc}
\hline \hline
Parameter & True Value & Mean & Q(2.5\%) & Q(97.5\%) \\
\hline \hline
$\rho_{12}$ & -0.4000 & -0.3993 & -0.4245 & -0.3749 \\
$\rho_{13}$ & 0.7000 & 0.7003 & 0.6835 & 0.7142 \\
$\rho_{23}$ & 0.3000 & 0.3003 & 0.2710 & 0.3268 \\
$\sigma^2_{1}$ & 1.0000 & 0.9778 & 0.8918 & 1.0933 \\
$\sigma^2_{2}$ & 0.8000 & 0.7867 & 0.6824 & 0.8885 \\
$\sigma^2_{3}$ & 0.5000 & 0.5613 & 0.4988 & 0.6305 \\
$\mu_{1}$ & -9.0000 & -9.0009 & -9.2870 & -8.6325 \\
$\mu_{2}$ & -9.5000 & -9.5044 & -9.6575 & -9.3313 \\
$\mu_{3}$ & -8.5000 & -8.4880 & -8.5960 & -8.4005 \\
$\phi_{1}$ & 0.7000 & 0.6928 & 0.6255 & 0.7494 \\
$\phi_{2}$ & 0.5000 & 0.4927 & 0.4120 & 0.5586 \\
$\phi_{3}$ & 0.3000 & 0.2954 & 0.2402 & 0.3541 \\ 
\hline \hline
\end{tabular}
\endgroup
\end{center}
\caption{True values and posterior estimates of parameters for Scenario 2 from the IR-MSV model.}
\label{Table:Simulation_Scenario_2_IR_MSV}
\end{table}

\begin{table}[H]
\begin{center}  
\begingroup
\setlength{\tabcolsep}{6pt} 
\renewcommand{\arraystretch}{0.6} 
\begin{tabular}{ccccc}
\hline \hline
Parameter & True Value & Mean & Q(2.5\%) & Q(97.5\%) \\
\hline \hline
$\rho_{12}$ & 0.7000 & 0.6985 & 0.6833 & 0.7147 \\
$\rho_{13}$ & 0.7000 & 0.6993 & 0.6859 & 0.7100 \\
$\rho_{23}$ & 0.7000 & 0.7006 & 0.6872 & 0.7139 \\
$\sigma^2_{1}$ & 1.0000 & 0.9865 & 0.9045 & 1.0815 \\
$\sigma^2_{2}$ & 0.8000 & 0.7811 & 0.7305 & 0.8642 \\
$\sigma^2_{3}$ & 0.5000 & 0.4885 & 0.4241 & 0.5620 \\
$\mu_{1}$ & -9.0000 & -8.9865 & -9.1761 & -8.7485 \\
$\mu_{2}$ & -9.5000 & -9.4942 & -9.6221 & -9.3748 \\
$\mu_{3}$ & -8.5000 & -8.4971 & -8.6012 & -8.3849 \\
$\phi_{1}$ & 0.7000 & 0.6998 & 0.6351 & 0.7572 \\
$\phi_{2}$ & 0.5000 & 0.4912 & 0.3955 & 0.5725 \\
$\phi_{3}$ & 0.5000 & 0.4965 & 0.3965 & 0.5785 \\
\hline \hline
\end{tabular}
\endgroup
\end{center}
\caption{True values and posterior estimates of parameters for Scenario 3 from the IR-MSV model.}
\label{Table:Simulation_Scenario_3_IR_MSV}
\end{table}

\section{Data Analysis: Multiple Intraday Log Returns}
We 
implement the proposed models 
on intraday log returns for three health stocks traded on the NYSE: Medtronic PLC (MDT), Bristol-Myers Squibb Co (BMY), and CVS Health Corp (CVS). In particular, we compare the 
forecast performance of the IR-MSV model with the IR-SV models fit to each stock.

\subsection{Data Description}
We considered the 
intra-day prices for MDT, BMY and CVS traded on June 24, 2016. We aggregated the high frequency prices at the one second level by taking the price with respect to the latest time point as in \cite{buccheri2021high}. 
There were 10066, 8797 and 11330 observations (transactions) for MDT, BMY, and CVS respectively. We synchronized these 
prices from three stocks using the refresh time sampling approach \citep{barndorff2011multivariate} to obtain a sample size of $T=6239$ synchronized observations. A brief description of refresh sampling technique is discussed in (Section 2.5.3). Using these irregularly spaced refreshed prices ($P_{t_{j}}$) we calculate the log returns as
\begin{equation*}
    r_{t_{j}} = \log(P_{t_{j}}) - \log(P_{t_{j-1}}), \quad 2\leq j \leq T
\end{equation*}
We have plotted the time series of log returns calculated from the refreshed prices of MDT, BMY and CVS on $24^{th}$ June, 2016 in Figure \ref{fig:log returns}. They seem to be fairly correlated.  
\begin{figure}[H]
    \centering
    \includegraphics[scale=0.70]{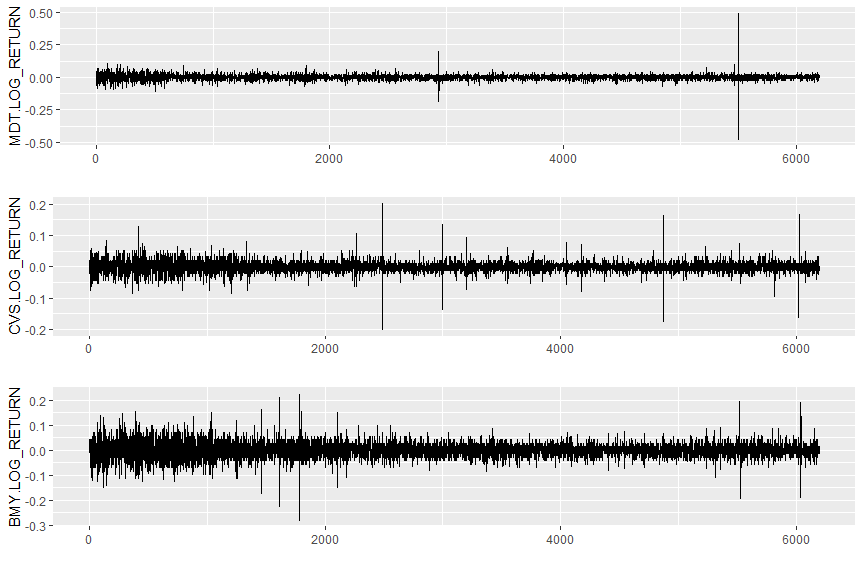}
    \caption{Log returns of BMY, CVS and MDT calculated from the refreshed prices on $24^{th}$ June, 2016}
    \label{fig:log returns}
\end{figure}

\subsection{Results}
We fit IR-MSV model as in equations (\ref{Eq:Irregular_MSV}) and (\ref{Eq:Irregular_MSV1}) to the irregularly spaced multivariate intraday log returns series of three stocks. We use first 6195 observations for fitting IR-MSV model and kept 44 observations as hold out samples. We have used exactly the same priors for the parameters as in the simulation setup. We run 300,000 MCMC iterations and discard the first 50,000 as burn-in to obtain samples from every $500^{th}$ iteration of the last 250,000 iterations. For convergence of MCMC iterations we have used trace and posterior density plots. In Table \ref{Table:IR_MSV_real_data}, we report the posterior sample mean along with their $95\%$ credible intervals. 
\begin{table}[H]
\begin{center}  
\begingroup
\setlength{\tabcolsep}{6pt} 
\renewcommand{\arraystretch}{0.6} 
\begin{tabular}{ccccc}
\hline \hline
Parameter & Mean & SD & 2.50\% & 97.50\% \\
\hline \hline
$\rho_{12}$ & 0.276 & 0.013 & 0.249 & 0.301 \\
$\rho_{13}$ & 0.352 & 0.012 & 0.326 & 0.372 \\
$\rho_{23}$ & 0.292 & 0.013 & 0.267 & 0.317 \\
$\sigma^2_{1}$ & 0.388 & 0.046 & 0.302 & 0.480 \\
$\sigma^2_{2}$ & 0.715 & 0.057 & 0.610 & 0.837 \\
$\sigma^2_{3}$ & 0.442 & 0.048 & 0.352 & 0.542 \\
$\mu_{1}$ & -7.286 & 0.054 & -7.390 & -7.176 \\
$\mu_{2}$ & -8.589 & 0.051 & -8.690 & -8.481 \\
$\mu_{3}$ & -8.286 & 0.052 & -8.395 & -8.190 \\
$\phi_{1}$ & 0.403 & 0.062 & 0.286 & 0.529 \\
$\phi_{2}$ & 0.143 & 0.042 & 0.077 & 0.236 \\
$\phi_{3}$ & 0.299 & 0.059 & 0.190 & 0.409 \\
\hline \hline 
\end{tabular}
\endgroup
\end{center}
\caption{Table showing posterior estimates and $95\%$ credible intervals from IR-MSV model for the intraday log returns of the three stocks.}
\label{Table:IR_MSV_real_data}
\end{table}
All parameters lie within the $95\%$ credible intervals limits. The estimated correlations are moderate indicating fair dependency among the components which is expected. The persistence parameter for BMY is much higher than that of CVS and MDT indicating the temporal correlation of volatility of BMY is much higher than the temporal correlation of volatility of CVS and MDT.  

We have also fitted IR-SV models to the univariate refreshed log return series of each stock of length $T=6195$. We have used the same prior as discussed in the simulation study of IR-SV models in Section \ref{IR-SV_simulation}. We run 550,000 MCMC iterations and discard the first 50,000 as burn-in to obtain samples from every 1000th iteration of the last 500,000 iterations. In Table \ref{Table:IR_SV_MDT}, Table \ref{Table:IR_SV_BMY} and Table \ref{Table:IR_SV_CVS} we report the posterior sample mean along with their $95\%$ credible intervals. All the parameters lie inside the $95\%$ credible intervals. Convergence for the parameters are assessed using trace and posterior density plots. We observe that the persistence parameter $\phi$ for BMY is higher than that of CVS and MDT which is consistent with the result obtained in case of IR-MSV model fitting.
\begin{table}[H]
\begin{center}  
\begingroup
\setlength{\tabcolsep}{6pt} 
\renewcommand{\arraystretch}{0.6} 
\begin{tabular}{ccccc}
\hline \hline
Parameter & Mean & SD & 2.50\% & 97.50\% \\
\hline \hline
$\mu$ & -7.248 & 0.084 & -7.413 & -7.088 \\
$\phi$ & 0.670 & 0.065 & 0.539 & 0.793 \\
$\sigma$ & 0.486 & 0.042 & 0.403 & 0.568 \\ 
\hline \hline
\end{tabular}
\endgroup
\end{center}
\caption{Table showing posterior estimates and $95\%$ credible intervals from IR-SV model for the refreshed intraday log returns of BMY.}
\label{Table:IR_SV_BMY}
\end{table}

\begin{table}[H]
\begin{center}  
\begingroup
\setlength{\tabcolsep}{6pt} 
\renewcommand{\arraystretch}{0.6} 
\begin{tabular}{ccccc}
\hline \hline
Parameter & Mean & SD & 2.50\% & 97.50\% \\
\hline \hline
$\mu$ & -8.571 & 0.053 & -8.672 & -8.468 \\
$\phi$ & 0.170 & 0.048 & 0.085 & 0.271 \\
$\sigma$ & 0.829 & 0.034 & 0.763 & 0.898 \\
\hline \hline
\end{tabular}
\endgroup
\end{center}
\caption{Table showing posterior estimates and $95\%$ credible intervals from IR-SV model for the refreshed intraday log returns of CVS.}
\label{Table:IR_SV_CVS}
\end{table}

\begin{table}[H]
\begin{center}  
\begingroup
\setlength{\tabcolsep}{6pt} 
\renewcommand{\arraystretch}{0.6} 
\begin{tabular}{ccccc}
\hline \hline
Parameter & Mean & SD & 2.50\% & 97.50\% \\
\hline \hline
$\mu$ & -8.267 & 0.059 & -8.381 & -8.148 \\
$\phi$ & 0.381 & 0.077 & 0.231 & 0.528 \\
$\sigma$ & 0.648 & 0.039 & 0.570 & 0.724 \\
\hline \hline
\end{tabular}
\endgroup
\end{center}
\caption{Table showing posterior estimates and $95\%$ credible intervals from IR-SV model for the refreshed intraday log returns of MDT.}
\label{Table:IR_SV_MDT}
\end{table}

\section{Summary and Discussion}
In this chapter we extend gap time modeling idea of \cite{nagaraja2011autoregressive} to propose univariate volatility models for irregular financial time series by modifying the regularly spaced stochastic volatility models. We also extend this approach to propose multivariate stochastic volatility (MSV) models for multiple
irregularly spaced time series which is a modification of the MSV model of \cite{chib2009multivariate} that was used with daily data. We used simulation studies to show the accuracy of estimation of the proposed models. We have used the proposed models to model intraday logarithmic returns for three health sector NYSE stocks and compared the univariate and multivariate forecasting performance on shorter and longer horizons.

We can construct an IR-GARCH model in a similar way. 
We have illustrated the accuracy of estimation in IR-GARCH(1,1) models using simulations. This idea can also be extended to multivariate GARCH models which has been left for future research. 

\section{Appendix}

\subsection{Explorations with Irregular GARCH models}

One of the early approaches to model the return series sampled at irregularly spaced time intervals set by the trade arrivals is ACD-GARCH model by \cite{ghysels1998garch}. This is a random coefficient GARCH, or doubly stochastic GARCH, where the stochastic durations between transactions determine the parameter dynamics for the GARCH equation. They proposed a Generalized Method of Moments (GMM) based two-step procedure for estimating the parameters. ACD-GARCH is quite cumbersome to estimate and also difficult to generalize to multiple time series. \cite{meddahi2006garch} proposed a GARCH type model for irregularly spaced data which is an exact discretization of continuous time stochastic volatility processes observed at irregularly spaced times. Their model combines the advantages of ACD-GARCH \cite{ghysels1998garch} and ACD \citep{Engle1998}. A continuous version of GARCH (COGARCH) is another way to model irregularly spaced time series data \citep{maller2008garch}. Recently, \cite{buccheri2021score} propose to model intraday log-prices through a multivariate local-level model with score-driven covariance matrices and to treat asynchronicity as a missing value problem.

In this section we do some preliminary investigation of extending  the gap time autoregressive model of \cite{nagaraja2011autoregressive} to develop an irregular GARCH (IR-GARCH) model. 
We introduce the IR-GARCH model and discuss its properties. 
We conduct a small simulation study to assess the accuracy of estimation under different scenarios. Future research of this topic can be useful.

Let $\{r_{t_{j}}\}$ be a sequence of zero mean log-returns of an asset. Here $t_{j}$ denotes the time of the $j^{th}$ observation and $g_j = t_{j} - t_{j-1}$, $j>1$ is the known $j^{th}$ observed gap time. Suppose $\{\epsilon_{t_{j}}\}$ is a sequence of real valued independent and identically distributed
random variables, having mean 0 and variance 1. 
We start with an IR-GARCH(1,1) model as shown below.

\subsubsection*{IR-GARCH(1,1)}

Then, $r_{t_{j}}$ follows IR-GARCH$(1,1)$ model if
\begin{equation}
\label{Eq:IR-GARCH}
    r_{t_{j}} = \sigma_{t_{j}} \epsilon_{t_{j}}
\end{equation}
\begin{equation}
\begin{gathered}
\label{Eq:latent_IR-GARCH}
    \sigma_{t_{1}}^2 = \omega(1-\alpha_1^{g_{1}} -\beta_1^{g_1}),\\
    \sigma_{t_{j}}^2 = \omega(1-\alpha_1^{g_{j}} -\beta_1^{g_{j}}) + \alpha_1^{g_{j}}r_{t_{j-1}}^2 + \beta_1^{g_j}\sigma_{t_{j-1}}^2, \text{ for $j>1$.}
\end{gathered}
\end{equation}
Let $g* = \min_{j}g_j$; then the constraints for $\sigma_{t_{j}}^2$ to be positive are $\omega >0, \alpha_1 >0, \beta_1 >0$ and $\alpha_1^{g*}+\beta_1^{g*} <1$.
The unconditional mean of $r_{t_{j}}$ is
\begin{align*}  
    \text{E}(r_{t_{j}}) = \text{E}\big(\text{E}\big(r_{t_{j}}|\mathcal{F}_{t_{j-1}}\big)\big)= \text{E}\big(\text{E}\big(\sigma_{t_{j}} \epsilon_{t_{j}}|\mathcal{F}_{t_{j-1}}\big)\big)= \text{E}\big(\sigma_{t_{j}} \text{E}\big(\epsilon_{t_{j}}|\mathcal{F}_{t_{j-1}}\big)\big)=0, \text{$\forall j$.}
\end{align*}
The unconditional variance of $r_{t_{j}}$ is
\begin{equation} \label{Eq:variance_IR_GARCH_t1}
    \text{Var}(r_{t_{1}}) = \text{E}(r^2_{t_{1}}) =  \omega(1-\alpha_1^{g_1} -\beta_1^{g_1}), \text{ since $E(\epsilon_{t_{1}}^2)=1$}
\end{equation}
and for $j>1$,
\begin{align*}
    \text{Var}(r_{t_{j}}) = \text{E}(r^2_{t_{j}}) = \text{E}[\text{E}[r^2_{t_{j}}| \mathcal{F}_{t_{j-1}}]] = E[\sigma^2_{t_{j}}] = \omega(1-\alpha_1^{g_j} -\beta_1^{g_j}) + \alpha_1^{g_j} \text{E}[r_{t_{j-1}}^2] + \beta_1^{g_j} \text{E}[\sigma_{t_{j-1}}^2]
\end{align*}
Assuming stationarity of $\{r_{t_{j}}\}$ with $\text{E}(r_{t_{j}}) =0$, $\text{Var}(r_{t_{j}}) = \text{Var}(r_{t_{j-1}})=\text{E}(r_{t_{j-1}}^2)$
we have 
\begin{equation}\label{Eq:variance_IR_GARCH}
    \text{Var}(r_{t_{j}}) = \omega, \text{ provided $\alpha_1^{g*}+\beta_1^{g*} <1$.
 }
\end{equation}

\subsubsection*{IR-ARCH(1) model}

The returns $r_{t_{j}}$ are said to follow an IR-ARCH(1) model if
\begin{equation}
\begin{gathered}
\label{Eq:IR-ARCH}
    r_{t_{j}} = \sigma_{t_{j}} \epsilon_{t_{j}},  \\
    \sigma_{t_{1}}^2 = \omega(1-\alpha_1^{g_1}),\\
    \sigma_{t_{j}}^2 = \omega(1-\alpha_1^{g_j}) + \alpha_1^{g_j}r_{t_{j-1}}^2 \text{, for $j>1$}, 
\end{gathered}
\end{equation}
where $g_{j}$'s are the observed gap time between two consecutive observations and $\{\epsilon_{t_{j}}\}$ is a sequence of real valued independent and identically distributed random variables with mean 0 and variance 1.
\begin{proposition}\label{Eq:IRARCH_theorem}
Suppose $\{r_{t_{j}}\}$ follows an IR-ARCH(1) model defined by (\ref{Eq:IR-ARCH}). Let $\eta_{t_{1}} = r_{t_{1}}^2 - \sigma_{t_{1}}^2 - \omega \alpha_{1}^{g_{1}}$ and  $\eta_{t_{j}} = r_{t_{j}}^2 - \sigma_{t_{j}}^2$, for $j>1$. Suppose $E[\eta_{t_{1}}^2] = C$ and for $j>1$, $E[\eta_{t_{j}}^2] = C(1-\alpha_1^{2g_j})$, where $C>0$ is a constant. Then $r_{t_{j}}^2$ is a stationary process.
\end{proposition}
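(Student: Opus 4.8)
The plan is to reduce the IR-ARCH(1) recursion to exactly the gap-AR(1) structure already analyzed for the log-volatility process $h_{t_j}$, and then recycle that argument. First I would rewrite the squared returns in mean-corrected form. Setting $\eta_{t_j} = r_{t_j}^2 - \sigma_{t_j}^2$ and substituting the second line of \eqref{Eq:IR-ARCH}, for $j>1$ one obtains
\[
r_{t_j}^2 - \omega = \alpha_1^{g_j}\bigl(r_{t_{j-1}}^2 - \omega\bigr) + \eta_{t_j},
\]
while the special definition of $\eta_{t_1}$ is arranged precisely so that $r_{t_1}^2 - \omega = \eta_{t_1}$. Thus $y_{t_j} := r_{t_j}^2 - \omega$ obeys the same recursion as $x_{t_j}=h_{t_j}-\mu$ in \eqref{Gap_SV3}, with $\phi$ replaced by $\alpha_1$ and $\mu$ by $\omega$. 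Unrolling gives the closed form $y_{t_j} = \sum_{k=1}^{j-1}\alpha_1^{\sum_{i=k+1}^{j}g_i}\eta_{t_k} + \eta_{t_j}$, mirroring the first display in the proof of the stationarity proposition for $h_{t_j}$.

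The next step is to establish that the $\eta_{t_j}$ behave like the uncorrelated, mean-zero innovations used in that earlier proof. Since $\sigma_{t_j}^2$ is $\mathcal{F}_{t_{j-1}}$-measurable and $\eta_{t_j} = \sigma_{t_j}^2(\epsilon_{t_j}^2 - 1)$ for $j>1$, conditioning on $\mathcal{F}_{t_{j-1}}$ and using $\text{E}(\epsilon_{t_j}^2)=1$ shows $\text{E}(\eta_{t_j}\mid\mathcal{F}_{t_{j-1}})=0$; hence $\{\eta_{t_j}\}$ is a martingale difference sequence, so the cross products satisfy $\text{E}(\eta_{t_k}\eta_{t_m})=0$ for $k\neq m$. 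This is the ingredient that replaces the independence of the Gaussian innovations used for $h_{t_j}$, and it is what lets the double sum for $\text{E}(y_{t_j}y_{t_{j+l}})$ collapse to a single sum of second moments.

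With uncorrelatedness in hand I would compute the mean and the autocovariance exactly as before. The imposed moment conditions $\text{E}(\eta_{t_1}^2)=C$ and $\text{E}(\eta_{t_j}^2)=C(1-\alpha_1^{2g_j})$ for $j>1$ are the direct analogues of the IR-SV innovation variances $\sigma_{\eta}^2/(1-\phi^2)$ and $\sigma_{\eta}^2(1-\phi^{2g_j})/(1-\phi^2)$, so substituting them into $\text{E}(y_{t_j}y_{t_{j+l}}) = \sum_{k}\alpha_1^{(\cdots)}\text{E}(\eta_{t_k}^2)$ produces the identical telescoping sum, which collapses to $C\,\alpha_1^{l}$ where $l=t_{j+l}-t_j=\sum_{i=j+1}^{j+l}g_i$. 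Since this depends only on the gap time $l$, and the first two moments of $r_{t_j}^2$ are constant in $j$, I would conclude that $\{r_{t_j}^2\}$ is weakly stationary.

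The delicate point, and the one I would watch most carefully, is the initial observation at $t_1$. The conditions are built so that the $k=1$ term of the telescoping sum carries the full stationary second moment $C$ rather than $C(1-\alpha_1^{2g_1})$, which is exactly what forces the process to start in its stationary regime and makes the covariance a clean function of $l$ alone; verifying that this choice reproduces the earlier cancellations, and reconciling it with the fact that here $\eta_{t_1}$ is a \emph{derived} quantity rather than a freely specified Gaussian innovation, is where the argument needs the most attention. Everything else is a transcription of the earlier computation with $\phi$ replaced by $\alpha_1$ and the innovation variances replaced by the assumed second moments.
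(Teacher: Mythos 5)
Your reduction is exactly the one the paper uses: define $\eta_{t_{1}} = r_{t_{1}}^2 - \sigma_{t_{1}}^2 - \omega \alpha_{1}^{g_{1}}$ and $\eta_{t_{j}} = r_{t_{j}}^2 - \sigma_{t_{j}}^2$, deduce $r_{t_{1}}^2 - \omega = \eta_{t_{1}}$ and $r_{t_{j}}^2 - \omega = \alpha_{1}^{g_{j}}(r_{t_{j-1}}^2 - \omega) + \eta_{t_{j}}$, and recognize $x_{t_{j}} = r_{t_{j}}^2 - \omega$ as a gap-time AR(1) in $\alpha_1$. Where you diverge is in how the conclusion is then reached. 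The paper stops at the reduction and invokes Proposition 5.1 of \cite{nagaraja2011autoregressive}, the assumed second moments $C$ and $C(1-\alpha_1^{2g_j})$ being exactly that proposition's hypotheses; you instead unroll the recursion and repeat the telescoping covariance computation from the proof of stationarity of $h_{t_j}$, supplying the martingale-difference argument ($\eta_{t_{j}} = \sigma_{t_{j}}^2(\epsilon_{t_{j}}^2 - 1)$ with $\sigma_{t_{j}}^2$ measurable with respect to $\mathcal{F}_{t_{j-1}}$) to replace the Gaussian independence used there. Your route is more self-contained and makes explicit the uncorrelatedness of the innovations that the citation leaves implicit; the paper's route is shorter but hides that work in the external reference. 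Both are legitimate, and your computation does collapse to $\mathrm{Cov}(x_{t_{j}}, x_{t_{j+l}}) = C\alpha_1^{l}$ as claimed.

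The delicate point you flag at $t_1$ is real, and the paper does not resolve it either. Under the stated initialization $\sigma_{t_{1}}^2 = \omega(1-\alpha_1^{g_1})$ one has $\mathrm{E}(r_{t_{1}}^2) = \sigma_{t_{1}}^2 = \omega(1-\alpha_1^{g_1})$, hence $\mathrm{E}(\eta_{t_{1}}) = -\omega\alpha_1^{g_1} \neq 0$ (the paper's claim that $\mathrm{E}(\eta_{t_{1}}) = 0$ silently drops the $\omega\alpha_1^{g_1}$ term from the definition of $\eta_{t_1}$), so $\mathrm{E}(x_{t_{j}}) = -\omega\alpha_1^{g_1 + \sum_{i=2}^{j} g_i}$ varies with $j$ and weak stationarity fails at the level of the first moment. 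The argument goes through cleanly only if the process is started in its stationary regime, e.g.\ with $\mathrm{E}(r_{t_{1}}^2) = \omega$. Since you identified precisely this as the step needing attention, I would not count it against you: it is a defect of the initialization shared by the paper's own proof, not a gap introduced by your argument.
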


\begin{proof}
We have 
\begin{equation*}
    \begin{aligned}
            \eta_{t_{1}} = r_{t_{1}}^2 - \sigma_{t_{1}}^2 - \omega \alpha_{1}^{g_{1}}\\
    \Rightarrow \sigma_{t_{1}}^2 = r_{t_{1}}^2 - \omega \alpha_{1}^{g_{1}} - \eta_{t_{1}}
    \end{aligned}
\end{equation*}
Substituting in equation (\ref{Eq:IR-ARCH}) we get,
\begin{equation*}
    r_{t_{1}}^2 - \omega = \eta_{t_{1}},
\end{equation*}
where $E(\eta_{t_{1}}) = E(r_{t_{1}}^2) - E(\sigma_{t_{1}}^2) = 0$, it follows from equation(\ref{Eq:variance_IR_GARCH_t1}).
Similarly, for $j > 1$,
\begin{equation*}
    \begin{aligned}
        &r_{t_{j}}^2 - \eta_{t_{j}} = \omega (1-\alpha_{1}^{g_{j}}) + \alpha_{1}^{g_{j}} r_{t_{j-1}}^2 \\
        &\Rightarrow r_{t_{j}}^2 - \omega = \alpha_{1}^{g_{j}}(r_{t_{j-1}}^2 - \omega) + \eta_{t_{j}}.
    \end{aligned}
\end{equation*}
where
$E(\eta_{t_{j}}) = E(r_{t_{j}}^2 - \sigma_{t_{j}}^2) = E(E(\sigma_{t_{j}}^2 \epsilon_{t_{j}}^2|\mathcal{F}_{t_{j-1}})) - E(\sigma_{t_{j}}^2) = E(\sigma_{t_{j}}^2) - E(\sigma_{t_{j}}^2) = 0$, for $j>1$.
Let $x_{t_{j}} = r_{t_{j}}^2 - \omega$, along with $E[\eta_{t_{1}}^2] = C$ and for $j>1$, $E[\eta_{t_{j}}^2] = C(1-\alpha_1^{2g_j})$, where $C>0$ is a constant.
Using Propostion 5.1 
\cite{nagaraja2011autoregressive}, 
it follows that $x_{t_{j}}$ and hence $r_{t_{j}}^2$ is a stationary process.

\end{proof}

\subsubsection*{Estimation}
We will use conditional maximum likelihood estimation \citep{tsay2005analysis} by assuming standard normal distribution for errors for estimating the parameters. Let $\boldsymbol{\theta} = (\omega,\alpha_1,\beta_1)'$ be the vector of parameters for instance, assuming $\epsilon_{t_{j}}$ to follow a standard normal distribution, the conditional Gaussian log-likelihood is given by 
\begin{equation}
    l_n(\boldsymbol{\theta}) = - \frac{1}{2} \sum_{j=2}^{n} \Bigg[\log(2\pi) + \log(\sigma_{t_{j}}^2) + \dfrac{r_{t_{j}}^2}{\sigma_{t_j}^2}\Bigg]
    ,
\end{equation}
where $\sigma_{t_j}^2$ is updated according to (\ref{Eq:latent_IR-GARCH}) for the IR-GARCH model. 
The estimated parameter vector is 
\begin{equation}
    \begin{aligned}
        \hat{\boldsymbol{\theta}} = \max_{\boldsymbol{\theta}} l_n(\boldsymbol{\theta}).
    \end{aligned}
\end{equation}

\subsection{IR-GARCH(1,1)} \label{IR-GARCH simulation}
We describe two scenarios. For each scenario, we generated 100 sets of zero mean log returns data, each of length $T = 5000$. The gap times $g_{j}, 1 \leq j \leq T$ are generated using a similar procedure as described in Section~\ref{IR-SV_simulation}. The following Table \ref{Table:Simulation_Scenario_1_IR_GARCH} and Table \ref{Table:Simulation_Scenario_2_IR_GARCH} give the true values and conditional ML estimates of parameters for Scenario 1 and Scenario 2.
\begin{table}[H]
\begin{center}  
\begin{tabular}{ccccc}
\hline \hline
Parameter & True Value & Estimate & 2.5\% & 97.5\% \\
\hline \hline
$\omega$ & 0.0100 & 0.0102 & 0.0088 & 0.0115 \\
$\alpha$ & 0.7000 & 0.7070 & 0.6479 & 0.7662 \\
$\beta$ & 0.2500 & 0.2427 & 0.1511 & 0.3343\\
\hline \hline
\end{tabular}
\end{center}
\caption{True values and conditional ML estimates of parameters for Scenario 1 from the IR-GARCH model.}
\label{Table:Simulation_Scenario_1_IR_GARCH}
\end{table}
\begin{table}[H]
\begin{center}  
\begin{tabular}{ccccc}
\hline \hline
Parameter & True Value & Estimate & 2.5\% & 97.5\% \\
\hline \hline
$\omega$ & 0.0100 & 0.0106 & 0.0073 & 0.0138 \\
$\alpha$ & 0.9000 & 0.9053 & 0.8682 & 0.9424 \\
$\beta$ & 0.0500 & 0.0475 & 0.0130 & 0.0820\\
\hline \hline
\end{tabular}
\end{center}
\caption{True values and conditional ML estimates of parameters for Scenario 2 from the IR-GARCH model.}
\label{Table:Simulation_Scenario_2_IR_GARCH}
\end{table}
We have used \texttt{R} package \texttt{Rsolnp} to optimize the conditional maximum likelihood function to obtain the parameters in IR-GARCH(1,1) model. All the true values of the parameters are inside the $95\%$ confidence interval.

\newpage
\bibliographystyle{abbrvnat} 
\bibliography{biblio-HFT-survey}

\end{document}